\newtheorem{thm}{Theorem}
\theoremstyle{definition}
\newtheorem{defn}[thm]{Definition}
\theoremstyle{remark}
\newtheorem{rem}[thm]{Remark}
\DeclareMathOperator{\setint }{int}
\DeclareMathOperator{\A}{Area}
\DeclareMathOperator{\EW}{EW}
\title{Holographic Entropy Cone Beyond AdS/CFT}
\author{Raphael Bousso and Sami Kaya}
\affiliation{Center for Theoretical Physics and Department of Physics,\\
University of California, Berkeley, California 94720, U.S.A. 
} 
\emailAdd{bousso@berkeley.edu}
\emailAdd{samikaya@berkeley.edu}
\abstract{We extend all known area inequalities obeyed by Ryu-Takayanagi surfaces of AdS boundary regions -- the holographic entropy cone -- to static generalized entanglement wedges of bulk regions in arbitrary spacetimes. The generalized holographic entropy cone is subject to a mutual independence condition on the bulk regions: each bulk input region must be outside the entanglement wedge of the union of all others. The condition captures when gravitating regions involve fundamentally distinct degrees of freedom despite the nonlocality inherent in the holographic principle.} 
\gdef\@fpheader{\mbox{}}
\begin{document}
\maketitle
\section{Introduction}

\paragraph{Entanglement Wedges} In recent years, significant progress has been made in understanding the deep connection between quantum entanglement and spacetime geometry in the context of holography, particularly through the concept of entanglement wedges. In Anti-de Sitter/Conformal Field Theory (AdS/CFT)  duality~\cite{Maldacena:1997re}, the entanglement wedge $\EW(B)$ for a given boundary subregion $B$~\cite{Ryu:2006bv,Hubeny:2007xt,Faulkner:2013ana,Engelhardt:2014gca} corresponds to a bulk region whose degrees of freedom are encoded within $B$, encapsulating the holographic principle. At leading order in Newton's constant $G$ (or in the inverse of the central charge $N$ of the dual CFT), the entanglement wedge is the homology region between $B$ and the smallest-area extremal surface homologous to $B$~\cite{Ryu:2006bv,Hubeny:2007xt}.

In a large class of bulk states~\cite{Akers:2020pmf}, this area computes the von Neumann entropy of the corresponding boundary subregions:
\begin{equation}
    S_B = \frac{\A[\EW(B)]}{4G} + O(G^0)~.
\end{equation}
The von Neumann entropies of distinct boundary regions must satisfy any universal inequality that holds for the entropy of subsystems in quantum mechanics. An example is the strong subadditivity (SSA) condition, which holds for all tripartite quantum systems: 
\begin{equation}
    S_{AB}+S_{BC}\geq S_{ABC}+S_B~.
\end{equation}
There is no reason a priori that the areas of geometric surfaces obey the same relation. Thus, the proof~\cite{Headrick:2007km,Wall:2012uf} that the areas of entanglement wedges do obey it,
\begin{equation}
    \A[\EW(AB)]+\A[\EW(BC)]\geq \A[\EW(ABC)]+\A[\EW(B)]~,
\end{equation}
marked an important consistency check for entanglement wedge duality. This agreement can be extended to higher order in $G$ by appealing to SSA of the quantum state of matter fields in bulk regions. 

\paragraph{Holographic Entropy Cone} The areas of entanglement wedges obey are additional inequalities, collectively known as the holographic entropy cone (HEC)~\cite{Bao:2015bfa}. These do not extend to higher order in $G$, and so they need not hold for general quantum states or even for general states of CFT subregions. They constrain the entanglement entropy of the CFT only at leading order in $G$ or $1/N$. Thus, they isolate the entanglement structure of the boundary states associated directly to the classical bulk geometry, as opposed to the bulk matter fields. 

The best known HEC inequality is the monogamy of mutual information (MMI)~\cite{Hayden:2011ag},
\begin{equation}\label{MMI}
    S_{AB}+S_{BC}+S_{AC}\geq S_{A}+S_{B}+S_{C}+S_{ABC}~.
\end{equation}
The next simplest is the five-party cyclic inequality,
\begin{equation} \label{five-party-cyclic}
    S_{ABC}+S_{BCD}+S_{CDE}+S_{DEA}+S_{EAB}\geq S_{AB}+S_{BC}+S_{CD}+S_{DE}+S_{EA}+S_{ABCDE}~.
\end{equation}
Large classes of HEC inequalities have been proven ``by contraction,'' a graph-theory technique first introduced in Ref.~\cite{Bao:2015bfa}\footnote{The unique contraction map for MMI \ref{MMI} and a non-unique contraction map for five-party cyclic inequality \ref{five-party-cyclic} was also explicitly given in \cite{Bao:2015bfa}. }. (Below will follow the more accessible presentation in Ref.~\cite{Akers:2021lms}.)
Indeed, the problem of finding all inequalities is equivalent to finding all contraction maps \cite{Bao:2024azn}. Our theorem will assume only the existence of a contraction map, so it will establish that the HEC and the generalized HEC are identical.

Despite much progress in recent years in enumerating and characterizing these inequalities, the full set of HEC inequalities remains unknown. This is mainly because the number of inequalities grows rapidly with the number of regions. To date, the HEC is fully known only for up to five parties \cite{HernandezCuenca:2019wgh}.  Nevertheless, thousands of six-party HEC inequalities \cite{Hernandez-Cuenca:2023iqh} and two infinite families involving an arbitrarily high number of parties \cite{Czech:2023xed} are also known. A first step in classifying all HEC inequalities by studying the relevant contraction maps was taken in Ref.~\cite{Bao:2024azn}.  

With the exceptions of SSA and MMI, the HEC inequalities have so far been proven only in static settings, where the geometry of the bulk is time-independent.
For some attempts at generalizing the holographic entropy cone to time-dependent settings in 2+1 dimensions see Refs.~\cite{Czech_2019, Grado-White:2024gtx}. Many distinct HEC inequalities were also tested numerically \cite{Grado-White:2024gtx}, and no counterexamples were found so far.

\paragraph{Generalized Entanglement Wedges} Recently, the notion of entanglement wedge has been extended to arbitrary spacetimes as follows. A \emph{generalized entanglement wedge} can be assigned not only to any boundary subregion in AdS, but to any subregion of any gravitating spacetime~\cite{Bousso:2022hlz,Bousso:2023sya}. This represents a significant expansion of the holographic paradigm, building on previous formulations of the holographic principle in the form of an entropy bound that applies in general spacetimes~\cite{Bousso:1999xy,Bousso:1999cb}. The quantum information in gravitating spacetimes adheres to the holographic principle, regardless of the presence of an asymptotic boundary.

The extension of the concept of entanglement wedges to bulk subregions raises an obvious question: does the HEC, previously derived for boundary subregions, apply to these more general constructs? The boundary of the entanglement can overlap with the boundary of the bulk input region, where it need not be extremal, so existing proofs do not immediately apply. In \cite{Bousso:2024ysg}, we made some progress on this question by showing that MMI also holds for generalized entanglement wedges both in static and time dependent settings. 

An independence condition on the bulk regions enters as an important novel criterion for the validity of MMI for generalized entanglement wedges. This criterion is trivially satisfied for disjoint boundary regions; it captures the nonlocal interdependence of bulk degrees of freedom in quantum gravity. 

In this paper, we will prove that all HEC inequalities satisfied by static entanglement wedges will also hold for static generalized entanglement wedges -- subject to  appropriate independence conditions on the input regions which we will identify. 

\paragraph{Notation} $\EW(B)$ denotes the entanglement wedge of an AdS boundary region; $E(b)$ is the generalized entanglement wedge of a static bulk region $b$. Set inclusion ($\supset$, $\subset$) is consistent with set equality ($=$). Sans serif letters $\mathsf{x},\mathsf{y},\mathsf{f}$ are used to denote classical bit strings; the $r$-th bit in the string $\mathsf{f}$ is $\mathsf{f}_r$. Further notation is defined immediately below.

\section{Static Generalized  Holographic Entropy Cone}

\begin{defn}
    Let $\Sigma$ be a time-reflection symmetric Cauchy slice. An open subset $a$ of $\Sigma$ will be called a \emph{wedge}. (This terminology is chosen for forward compatibility with the time-dependent case not considered in this paper. Otherwise it would be more natural to call $a$ a spatial region.)
\end{defn}

\begin{defn}
For any wedge $a$, $\A(a)$ denotes the area of  $\partial a$, its boundary in $\Sigma$.
\end{defn}
\begin{defn} \label{def:complement} The \emph{complement} of $a$ is the wedge
\begin{equation}
    a'  \equiv \setint (\Sigma\setminus a)~,
\end{equation}
where $\setint$ denotes the interior of a set.
\end{defn}

\begin{defn}
    Let $a,b\subset \Sigma$ be wedges. The \emph{wedge union} of $a$ and $b$ is the double complement of the usual set union:
    \begin{equation}
        a\Cup b\equiv (a\cup b)''~.
    \end{equation}
    The purpose of the double complement is to add in the portions of $\partial a\cap \partial b$ that separate $a$ from $b$. Whenever possible, we use the abbreviated notation
    \begin{equation}
        ab\equiv a\Cup b~.
    \end{equation}
\end{defn}

\begin{defn}
Let $\tilde \Sigma$ denote the \emph{conformal completion} of $\Sigma$~\cite{Wald:1984rg}. The boundary of $\tilde \Sigma$, denoted as $\partial \tilde \Sigma$,  is called the \emph{conformal boundary} of $\Sigma$. 
\end{defn}
\begin{defn}\label{def:staticconformaledge}
The boundary of $a$ in the conformal completion $\tilde \Sigma$ will be denoted $\delta a$.
We define the {\em conformal boundary} of $a$ as the set
\begin{equation}
    \tilde\partial a\equiv \delta a\cap \partial\tilde \Sigma~;
\end{equation}
thus,
\begin{equation}
    \delta a = \partial a\sqcup \tilde\partial a~,
\end{equation}
where $\sqcup$ denotes the disjoint union;
see Fig \ref{fig-conf-bdy}.
\begin{figure}[t]
  \centering
 \includegraphics[width = 0.5\linewidth]{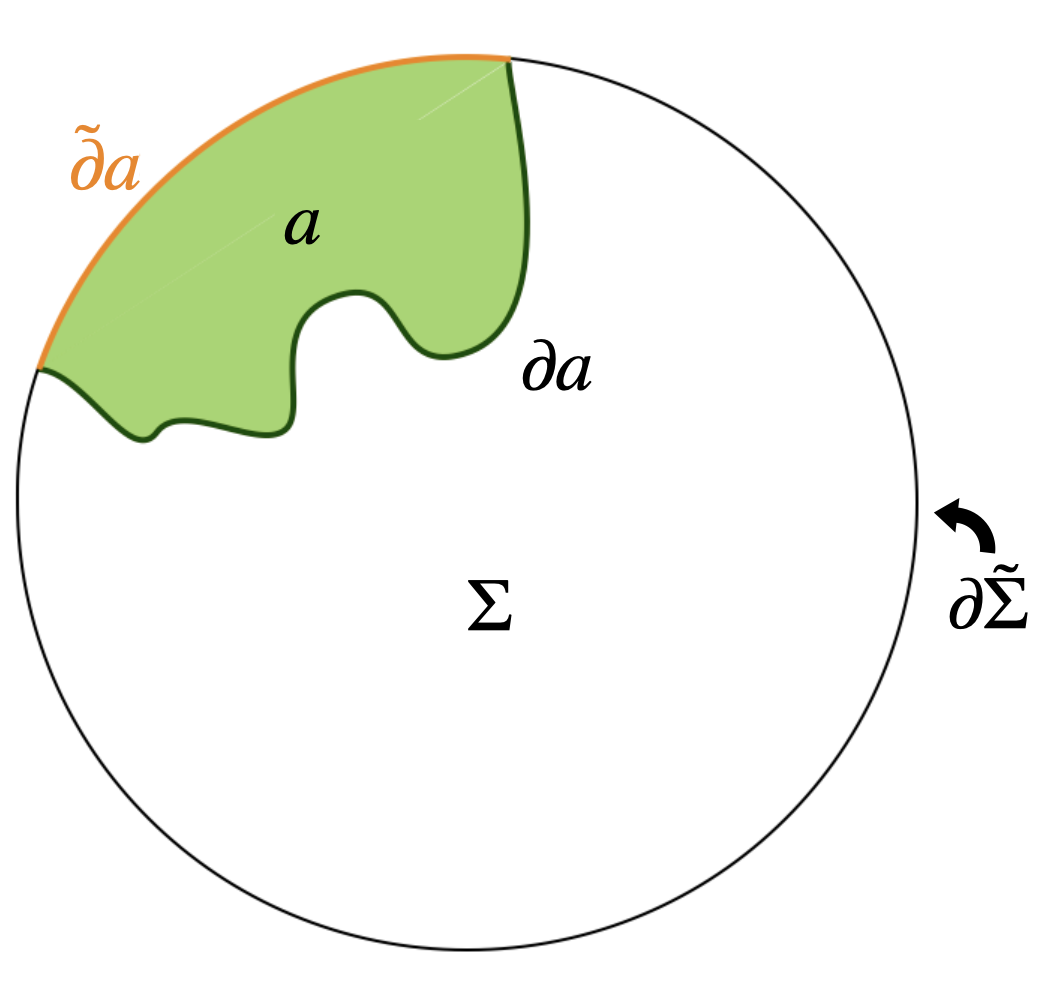}
\caption{Static Cauchy slice $\Sigma$ with conformal boundary $\partial \tilde \Sigma$.
The boundary $\delta a \subset \tilde\Sigma$ of a wedge $a$ in the conformal completion $\tilde\Sigma$ decomposes into the original boundary $\partial a \subset \Sigma$ and a conformal boundary $\tilde \partial a \subset \partial \tilde\Sigma$.}
\label{fig-conf-bdy}
\end{figure}

\end{defn}
\begin{defn}[Static Generalized Entanglement Wedge, Classical Limit~\cite{Bousso:2022hlz}]\label{def:staticgew}
    Let $a$ be a wedge. The static entanglement wedge $E(a)$ is the open subset of $\Sigma$ that contains $a$, has the same conformal boundary as $a$ (if any), and has the smallest boundary area among all such sets.
\end{defn}


\begin{rem} \label{rem:mmi}
Below we state and prove our main result, building on~\cite{Bao:2015bfa,Akers:2021lms}. As a pedagogical example to guide intuition and illustrate definitions, we will frequently make reference to MMI, 
\begin{multline}\label{eq:mmi}
    \A[E(a_1a_2)] +  \A[E(a_2a_3)] +  \A[E(a_1a_3)]   \\ \geq  \A[E(a_1)] +  \A[E(a_2)] +  \A[E(a_3)] +  \A[E(a_1a_2a_3)] ~,
\end{multline}
even though this is among the two HEC inequalities that have already been proven for generalized entanglement wedges~ \cite{Bousso:2024ysg}. To convert the above expression to the notation in Eq.~\eqref{eq:ineq} below, one should set $L=3$, $v_1 =a_2a_3$, $v_2 =a_1a_3$, $v_3 =a_1a_2$; and $R=4$, $w_1 =a_1$, $w_2 =a_2$, $w_3 =a_3$, $w_4 = a_1a_2a_3$.
    
\end{rem}

\begin{defn}
    Let $\mathsf{x}=(\mathsf{x}_1,\ldots,\mathsf{x}_L) \in\{0,1\}^L$ be a classical bit string of length $L$ and let $\mathsf{y}=(\mathsf{y}_1,\ldots,\mathsf{y}_L)$ be another such string. We will write $\mathsf{x}<\mathsf{y}$ if the inequality holds when the strings are viewed as an integer written in binary, i.e., if 
    \begin{equation}
    \sum_{l=1}^L \mathsf{x}_l 2^l<\sum_{l=1}^L \mathsf{y}_l 2^l~.
    \end{equation}
\end{defn}

\begin{defn}\label{def:distance}
    Let $\mathsf{x}$ and $\mathsf{y}$ be bit strings of length $L$; and let $\alpha=(\alpha_1,\ldots,\alpha_L)\in \mathbb{N}^L$ be a string of positive integers. We define the \emph{$\alpha$-distance} between $\mathsf{x}$ and $\mathsf{y}$ as
     \begin{equation}
    d(\alpha;\mathsf{x},\mathsf{y}) \equiv \sum_{l=1}^L \alpha_l |\mathsf{x}_l-\mathsf{y}_l|~.
\end{equation}
\end{defn}

\begin{defn}\label{def:contraction}
Given strings of positive integers $\alpha=(\alpha_1,\ldots,\alpha_L)\in \mathbb{N}^L$ and $\beta=(\beta_1,\ldots,\beta_R)\in \mathbb{N}^R$, the function $\mathsf{f}:\{0,1\}^L\to\{0,1\}^R$ is an \emph{$\alpha,\beta$ contraction map} if 
\begin{equation}
     d(\alpha;\mathsf{x},\mathsf{y}) \geq d(\beta;\mathsf{f}(\mathsf{x}),\mathsf{f}(\mathsf{y})) 
\end{equation}
for all $\mathsf{x},\mathsf{y} \in \{ 0,1\}^L$. 
\end{defn}

\begin{thm} (Generalized Static Holographic Entropy Cone)\label{thm:staticHEC}
    Let $a_0=\varnothing$; and let $a_1,\ldots, a_n$ be open subsets of the static Cauchy slice $\Sigma$ such that
\begin{equation}\label{eq:staticindep}
    a_i \subset [E(\Cup_{j\neq i} a_j)]'
\end{equation}
for all $i$, where $E$ denotes the generalized entanglement wedge (Def.~\ref{def:staticgew}) and the prime denotes the wedge complement (Def.~\ref{def:complement}). For each $i=0,\ldots,n$, let $\mathsf{x}^i$ be a string of $L$ bits, let $\mathsf{y}^i$ be a string of $R$ bits, with $\mathsf{x}^0=00\ldots 0$ and $\mathsf{y}^0=00\ldots 0$; and let 
\begin{equation}
    v_l=\Cup_{i:x^i_l=1} a_i~,~~~w_r=\Cup_{i:y^i_r=1} a_i~.
\end{equation}
Let $\alpha=(\alpha_1,\ldots,\alpha_L)\in \mathbb{N}^L$ and $\beta=(\beta_1,\ldots,\beta_R)\in \mathbb{N}^R$.
If there exists an $\alpha,\beta$ contraction map $\mathsf{f}:\{0,1\}^L\to\{0,1\}^R$ with $\mathsf{f}(\mathsf{x}^i)=\mathsf{y}^i$ for all $i\in \{0,\ldots,n\}$, then
\begin{equation}
\label{eq:ineq}
    \sum_{l=1}^L \alpha_l \A[E(v_l)] \ge \sum_{r=1}^R \beta_r \A[E(w_r)] ~.
\end{equation}
\end{thm}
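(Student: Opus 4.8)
The plan is to follow the "proof by contraction" strategy of Ref.~\cite{Bao:2015bfa} (in the streamlined form of Ref.~\cite{Akers:2021lms}), adapting each step so that it goes through for generalized entanglement wedges rather than RT surfaces of boundary regions. The key structural input that must replace the usual ingredients is the independence condition \eqref{eq:staticindep}, which should guarantee that the elementary regions $a_i$ behave, as far as entanglement-wedge nesting and area subadditivity are concerned, like \emph{disjoint} boundary regions. So the first step is to establish the basic toolkit for $E(\cdot)$: (i) \emph{nesting}/monotonicity, $a\subset b \Rightarrow E(a)\subset E(b)$; (ii) \emph{subadditivity of area under wedge union}, $\A[E(ab)] \le \A[E(a)] + \A[E(b)]$, and more generally the strong-subadditivity/crossing inequality $\A[E(ab)] + \A[E(bc)] \ge \A[E(abc)] + \A[E(b)]$; and (iii) the fact that, under \eqref{eq:staticindep}, each $v_l$ and $w_r$ is itself a wedge whose entanglement wedge is computed by a genuine minimization over surfaces homologous (rel conformal boundary) to the relevant union of $a_i$'s. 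Items (i) and (ii) follow from Def.~\ref{def:staticgew} by the standard cut-and-paste / min-as-infimum arguments; the independence condition is what lets us treat the $a_i$ as a ``partition'' so that the combinatorial bookkeeping below is meaningful.

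The heart of the argument is then the standard contraction-map construction. For each bit position $l$, let $m_l$ denote the minimal surface $\partial E(v_l)$, and regard it as partitioning $\Sigma$ (or really its conformal completion) into regions labeled by which $a_i$ they ``belong to.'' The collection $\{m_l\}_{l=1}^L$ induces a cell decomposition of $\Sigma$; each cell $c$ carries a bit string $\mathsf{x}(c)\in\{0,1\}^L$ recording on which side of each $m_l$ it lies, and the independence condition ensures this labeling is consistent with the $a_i$-structure (in particular the $a_i$'s land in cells with the prescribed labels $\mathsf{x}^i$). Now push the cell labels through the contraction map $\mathsf{f}$ to obtain, for each $r$, a candidate surface $\tilde w_r$ on the right-hand side: $\tilde w_r$ is the boundary of the union of all cells $c$ with $\mathsf{f}(\mathsf{x}(c))_r = 1$. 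The defining contraction inequality $d(\alpha;\mathsf{x},\mathsf{y})\ge d(\beta;\mathsf{f}(\mathsf{x}),\mathsf{f}(\mathsf{y}))$, applied cell-by-cell across every shared codimension-one face, yields the pointwise area comparison
\begin{equation}
    \sum_{l=1}^L \alpha_l\, \A(m_l) \ge \sum_{r=1}^R \beta_r\, \A(\tilde w_r)~.
\end{equation}
Finally, $\tilde w_r$ bounds an open set containing $w_r$ and sharing its conformal boundary, so by the minimality in Def.~\ref{def:staticgew} we have $\A[E(w_r)]\le \A(\tilde w_r)$; combined with $\A(m_l)=\A[E(v_l)]$ this gives \eqref{eq:ineq}.

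Two points require genuine care, and I expect one of them to be the main obstacle. The minor point: because the boundaries of the input regions $a_i$ need not be extremal, the surfaces $m_l$ can run along $\partial a_i$, and one must verify that the cell decomposition and the face-by-face accounting still work when pieces of different $m_l$'s coincide with each other or with $\partial a_i$ — this is handled by perturbing or by a careful measure-theoretic treatment of the overlapping-surface contributions, exactly as in the boundary case but now also at the bulk input boundaries. The main obstacle is the \emph{homology/conformal-boundary bookkeeping}: one must show that the candidate region built from $\mathsf{f}$-labeled cells genuinely has the same conformal boundary as $w_r$ (so that it is admissible in the minimization defining $E(w_r)$), and that the independence condition \eqref{eq:staticindep} is exactly what is needed for the cell labels $\mathsf{x}(c)$ to be well-defined and to respect the assignment $a_i \mapsto \mathsf{x}^i$ — in particular that no $a_i$ is forced by the minimal surfaces $m_l$ into a cell with the ``wrong'' label because its degrees of freedom were already captured inside $E(\Cup_{j\ne i}a_j)$. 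Establishing that \eqref{eq:staticindep} guarantees $a_i \cap E(\Cup_{j\ne i} a_j) = \varnothing$ propagates correctly through all the $v_l$, so that the left-hand-side surfaces partition $\Sigma$ in a way compatible with $\mathsf{f}$, is the crux; once that is in place, the contraction inequality does the rest mechanically.
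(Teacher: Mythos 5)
Your proposal is correct and follows essentially the same route as the paper: tile $\Sigma$ by the intersections of the $E(v_l)$ and their complements, build candidate right-hand-side regions $c_r$ from the tiles selected by $\mathsf{f}$, apply the contraction inequality face-by-face on the shared tile boundaries $\gamma(\mathsf{x},\mathsf{y})$, and finish with the minimality in Def.~\ref{def:staticgew} to get $\A[E(w_r)]\le\A(c_r)$. The ``main obstacle'' you flag is resolved exactly as you anticipate: the independence condition \eqref{eq:staticindep} forces each $a_i$ into the single tile $s(\mathsf{x}^i)$, so only the $n+1$ tiles $s(\mathsf{x}^0),\ldots,s(\mathsf{x}^n)$ can reach the conformal boundary, and the hypothesis $\mathsf{f}(\mathsf{x}^i)=\mathsf{y}^i$ (crucially including $i=0$) then guarantees both $c_r\supset w_r$ and $\tilde\partial c_r=\tilde\partial w_r$.
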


\begin{proof}

\begin{figure}
\includegraphics[width=10cm]{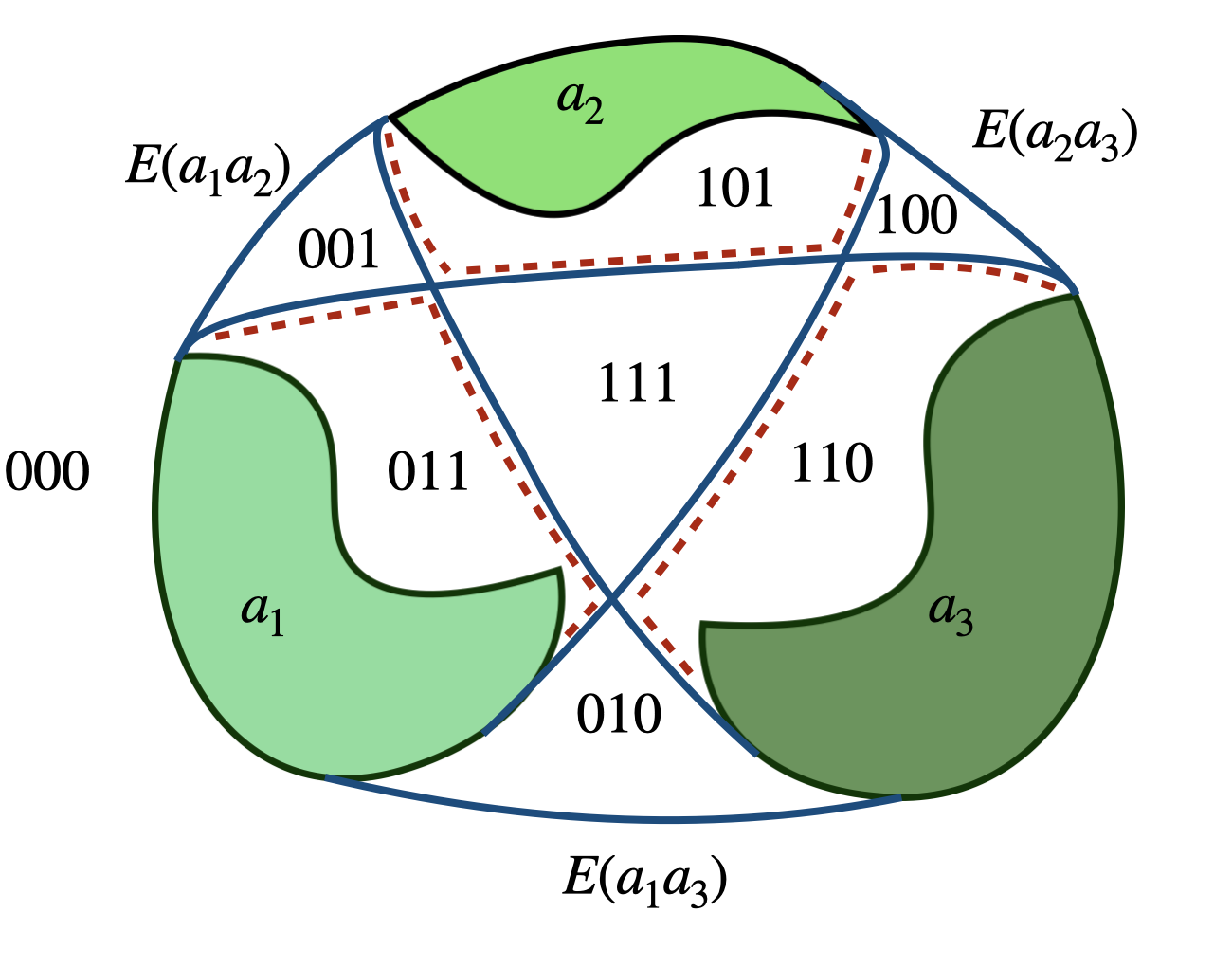}
\centering
\caption{\label{ex:mmi} Illustration of various definitions using the example of MMI (see Remark~\ref{rem:mmi}). The Cauchy slice $\Sigma$ is partitioned into tiles $s(\mathsf{x})$ labeled by strings of 3 bits $\mathsf{x}$ (shown) that describe their inclusion (1) or exclusion (0) in the three entanglement wedges appearing on the left hand side of Eq.~\eqref{eq:mmi}. The unique tiles containing a single set $a_i$ carry the labels $\mathsf{x}^1 = 011$, $\mathsf{x}^2 = 101$, and $\mathsf{x}^3 = 110$.
Candidate sets $c_r$ for $E(w_r)$ are built from these tiles; thus if some $a_i\subset w_r$, then the corresponding tile must be included. For example, since $w_2 = a_2$, $c_2$ must include $s(101)$; this implies $\mathsf{f}_2(101)= 1$ (see Table \ref{tab:mmi}). Similarly, to ensure the correct conformal boundary some tiles must be excluded. For example, the tile $s(000)$ must be excluded from all $c_r$; this implies $\mathsf{f}(\mathsf{x}^0)= \mathsf{f}(000)= 0000$.}
\end{figure}

We begin by partitioning\footnote{The tiles partition $\Sigma$ in the sense that $s(\mathsf{x})\cap s(\mathsf{y}) = \delta_{\mathsf{xy}}$; and that $\Sigma=\Cup_{\mathsf{x}}s(\mathsf{x})$.} $\Sigma$ into $2^L$ ``tiles'' $s(\mathsf{x})$, characterized by their inclusion in, or exclusion from, each entanglement wedge $E(v_l)$:
    \begin{equation}
\label{eq:sigmaxs}
    s(\mathsf{x}) = \bigcap_{l=1}^L E(v_l)^{\mathsf{x}_l}~, ~~\text{where} ~~ E(v_l)^{j} =
    \begin{cases}
        E(v_l)  &\text{if} \quad j=1~, \\
        E(v_l)' &\text{if} \quad j=0~. 
    \end{cases}
\end{equation}
\begin{table}
\setlength{\tabcolsep}{.3cm}
	\centering
	\begin{tabular}{c || c | c | c||c | c | c | c}
		& $a_2a_3$ & $a_1a_3$ & $a_1a_2$ & ~$a_1$~ & ~$a_2$~ & ~$a_3$~ & $a_1a_2a_3$ \\ \hline
        & $\mathsf{x}_1$ & $\mathsf{x}_2$ & $\mathsf{x}_3$ & ~$\mathsf{f}_1(\mathsf{x})$~ & ~$\mathsf{f}_2(\mathsf{x})$~ & ~$\mathsf{f}_3(\mathsf{x})$~ & $\mathsf{f}_4(\mathsf{x})$ \\ \hline
		$\mathsf{x}^0$ & 0 & 0 & 0 & 0 & 0 & 0 & 0 \\
		 ~  & 0 & 0 & 1 & 0 & 0 & 0 & 1 \\
		 ~  & 0 & 1 & 0 & 0 & 0 & 0 & 1 \\
         $\mathsf{x}^1$ & 0 & 1 & 1 & 1 & 0 & 0 & 1 \\
		 ~  & 1 & 0 & 0 & 0 & 0 & 0 & 1 \\
		$\mathsf{x}^2$ & 1 & 0 & 1& 0 & 1 & 0 & 1 \\

		$\mathsf{x}^3$ & 1 & 1 & 0 & 0 & 0 & 1 & 1 \\
		 ~  & 1 & 1 & 1 & 0 & 0 & 0 & 1 \\
	\end{tabular}
	\caption{
	The contraction map $\mathsf{f}:\{0,1\}^3\to\{0,1\}^4, \mathsf{x}\to \mathsf{f}(x)$ that implies MMI for generalized entanglement wedges. The top row shows the input wedges $v_l$ and $w_r$. The truth values of $\mathsf{x}$ correspond to the inclusion (1) or exclusion (0) of each left-hand-side entanglement wedge from the tile $s(\mathsf{x})$. The independence condition \eqref{eq:staticindep} ensures that each input region $a_i$ is contained in exactly one tile $s(\mathsf{x}^i)$; see the leftmost column. The truth value of $\mathsf{f}_r(\mathsf{x})$ determines whether the tile $s(\mathsf{x})$ is used as part of the region $c_r$.} 
	\label{tab:mmi}
\end{table}
and  $\mathsf{x}=(\mathsf{x}_1,\ldots,\mathsf{x}_l) \in\{0,1\}^L$; see Fig.~\ref{ex:mmi}. The tiles will have shared boundary segments
\begin{equation}
\label{eq:adjacent}
    \gamma(\mathsf{x},\mathsf{y}) = \partial s(\mathsf{x}) \cap \partial s(\mathsf{y})~,~~~\mathsf{x}<\mathsf{y}~,
\end{equation}
where we restrict to $\mathsf{x}<\mathsf{y}$ to avoid overcounting.
Note that
\begin{equation}
\label{eq:gammapiec}
    \partial E(v_l) = \bigcup_{\mathsf{x}<\mathsf{y}:\mathsf{x}_l \ne \mathsf{y}_l} \gamma(\mathsf{x},\mathsf{y})~;
\end{equation}
and hence,
\begin{equation}\label{static_LHS}
    \A[E(v_l)] = \sum_{\mathsf{x}<\mathsf{y}}\, |\mathsf{x}_l - \mathsf{y}_l|\, \A\left[\gamma(\mathsf{x},\mathsf{y})\right]~.
\end{equation}
Using Def.~\ref{def:distance} this implies
\begin{equation} \label{LHS}
     \sum_{l=1}^L \alpha_l \A[E(v_l)] 
     = \sum_{\mathsf{x}<\mathsf{y}} 
d(\alpha;\mathsf{x},\mathsf{y})\A\left[\gamma(\mathsf{x},\mathsf{y})\right]~.
\end{equation}

We now define $R$ new wedges as unions of tiles:
\begin{equation}
\label{eq:rhshomo}
    c_r = \bigcup_{\mathsf{x}:\,\mathsf{f}_r(\mathsf{x})=1} s(\mathsf{x})~.
\end{equation}
Thus, the $r$-th truth value of the contraction map $\mathsf{f}$ 
informs us whether the tile $s(\mathsf{x})$ is included in $c_r$. We note that
\begin{equation} \label{static_RHS}
    \A[c_r] =  \sum_{\mathsf{x}<\mathsf{y}} \, |\mathsf{f}_r(\mathsf{x}) - \mathsf{f}_r(\mathsf{y})|\, 
    \A\left[\gamma(\mathsf{x},\mathsf{y})\right].
\end{equation}
Using Def.~\ref{def:distance} this implies
\begin{equation} \label{LHS}
     \sum_{r=1}^R \beta_r \A[c_r] 
     = \sum_{\mathsf{x}<\mathsf{y}} 
d(\beta;\mathsf{f}( \mathsf{x}), \mathsf{f}(\mathsf{y}))\A\left[\gamma(\mathsf{x},\mathsf{y})\right]~.
\end{equation}
Since $\mathsf{f}$ is an $\alpha,\beta$ contraction map (see Def.~\ref{def:contraction}), it follows that
\begin{equation}
    \sum_{l=1}^L \alpha_l \A[E(v_l)] \geq \sum_{r=1}^R \beta_r \A[c_r]~. 
\end{equation}

By the independence condition, Eq.~\eqref{eq:staticindep}, $s(\mathsf{x}^i)$ is the unique tile that contains $a_i$ and no other tiles intersect with $a_i$. The assumption that 
$\mathsf{f}(\mathsf{x}^i)=\mathsf{y}^i$ thus ensures that $c_r$ contains precisely the $a_i$'s that constitute $w_r$. This implies that $c_r\supset w_r$ and that $\tilde\partial c_r=\tilde\partial w_r$. The latter equality follows because only the $n+1$ tiles $s(\mathsf{x}^i)$ can have a conformal boundary, by Eq.~\eqref{eq:staticindep} and Def.~\ref{def:staticgew}; the assumption that $\mathsf{f}(\mathsf{x}^0)=\mathsf{y}^0$ is vital.

By Def.~\ref{def:staticgew}, $E(w_r)$ is the smallest-area wedge that contains $w_r$ and has the same conformal boundary. We have just established that $c_r$ is among the wedges satisfying these criteria, so it follows that $\A[E(w_r)]\leq \A(c_r)$. Combined with the above inequality this yields Eq.~\eqref{eq:ineq}.
\end{proof}

\section{Discussion}

The proof of the traditional (boundary input region) static holographic entropy cone \cite{Bao:2015bfa, Akers:2021lms} differs from the present proof of the generalized static entropy cone differ chiefly through the role of the independence condition~\eqref{eq:staticindep}. Both proofs rely on the existence of a contraction map $\mathsf{f}$ to construct regions that satisfy the definition of an entanglement wedge except for area minimalization. 

\begin{figure}[ht!]
\includegraphics[width=10cm]{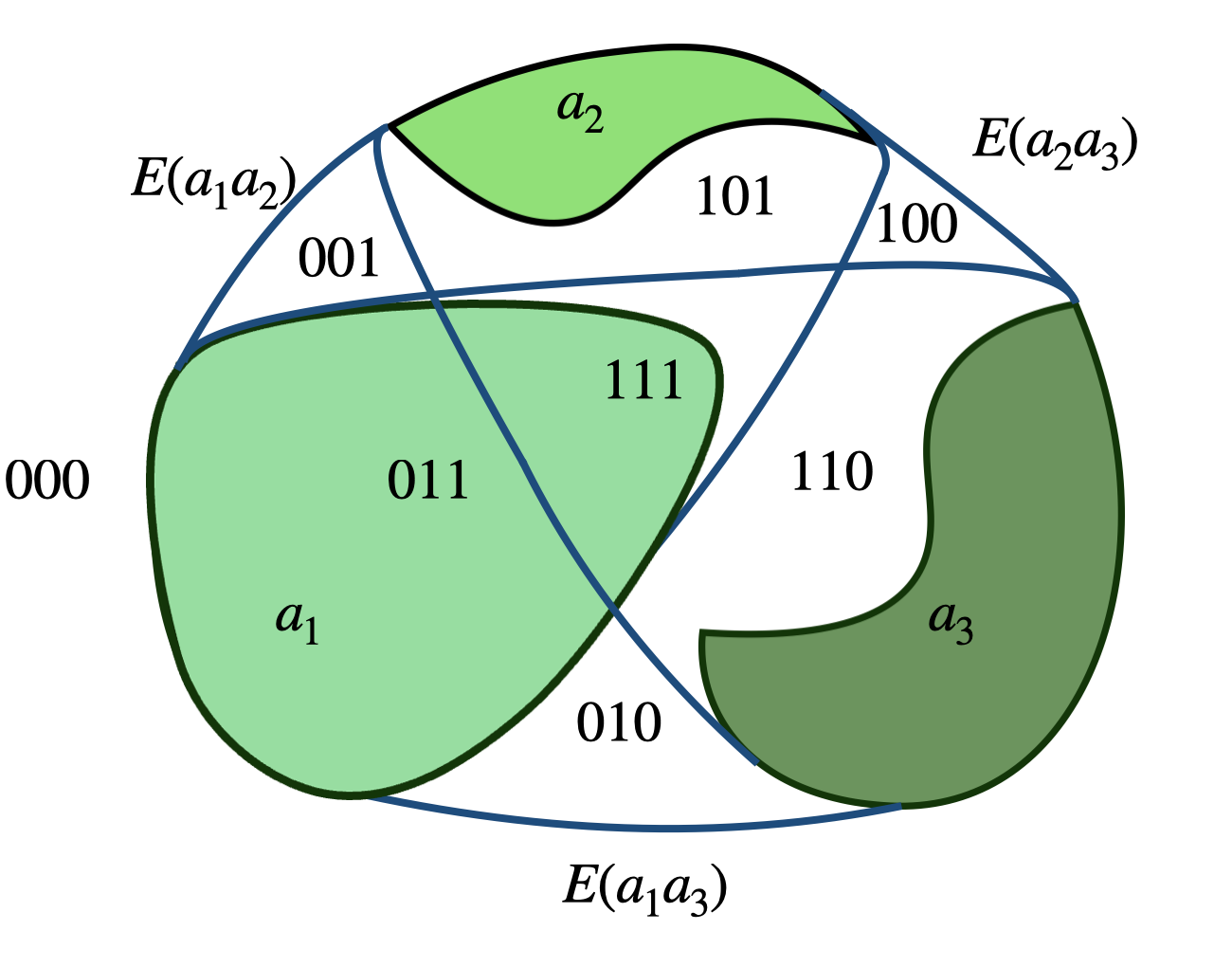}
\centering
\caption{\label{ex:mmi-counter} An example of a configuration that can violate MMI because the input regions violate independence condition \eqref{eq:staticindep}: $a_1 \not \subset E(a_2a_3)'$.}
\end{figure}

Technically, the independence condition ensures that each bulk input region $a^i$ is fully contained in one tile $s(\mathsf{x}^i)$ and does not intersect any other tile. (With boundary input regions, this followed automatically from the homology constraint on RT surfaces.) If this was not the case, then the inclusion condition of Def. \ref{def:staticconformaledge} would impose additional constraints on the map $\mathsf{f}$ that are not satisfied by the known families of contraction maps. In fact, it is possible to find configurations which do not obey the independence condition \ref{eq:staticindep} and violate the HEC; see Fig.~\ref{ex:mmi-counter} for an example. 

We expect that the true origin of the independence condition is deeper. The existence of bulk entanglement wedges illustrates that there is a sense in which fundamental degrees of freedom in the input region contain information about the semiclassical state of the entanglement wedge~\cite{Bousso:2022hlz,Bousso:2023sya}. 

This can be made more precise is if the input region $a$ is an asymptotic region in AdS, with conformal boundary $\tilde \partial a$. In that case it is known that the local CFT operators in $\tilde \partial a$ correspond to quasilocal semiclassical operators $\{\cal O\}$ in $a$. To reconstruct $E(a)$, however, one needs the full operator algebra; this is generated by the operators in $\{\cal O\}$. But generic operators in the full algebra have no semiclassical interpretation in $a$.  

Thus, the condition \ref{eq:staticindep} corresponds to the statement that no data in the input region $a_i$ is accessible to the fundamental degrees of freedom associated with the union of all other $a_j$, $j\neq i$. We refer the reader to upcoming work \cite{kaya2025} for a further discussion of the independence condition  with examples in tensor networks and fixed induced metric states.

\subsection*{Acknowledgements}
We thank Pratik Rath for discussions. This work was supported in part by the Berkeley Center for Theoretical Physics; and by the Department of Energy, Office of Science, Office of High Energy Physics under Award DE-SC0025293.
\bibliographystyle{JHEP}
\bibliography{generalizedHEC}

\end{document}